\newif\ifpazo\pazofalse
\def\clap#1{\hbox to 0pt{\hss#1\hss}}
\newcommand{\xqedhere}[2]{\rlap{\hbox to#1{\hfil\llap{\ensuremath{#2}}}}}
\ifpazo \newcommand{\indic}{\mathbb{1}} 
\else \newcommand{\indic}{\mathbbm{1}} \fi
\newcommand{\EE}{\mathbb{E}\,}
\newcommand{\cB}{\mathcal{B}}
\newcommand{\cS}{\mathcal{S}}
\newcommand{\cT}{\mathcal{T}}
\newcommand{\cX}{\mathcal{X}}
\newcommand{\bx}{\mathbf{x}}
\newcommand{\tO}{\widetilde{O}}
\newcommand{\tOmega}{\widetilde{\Omega}}
\newcommand{\tTheta}{\widetilde{\Theta}}
\renewcommand{\ge}{\geqslant}
\renewcommand{\le}{\leqslant}
\renewcommand{\b}{\{0,1\}}
\newcommand{\ceil}[1]{\lceil{#1}\rceil}
\newcommand{\floor}[1]{\lfloor{#1}\rfloor}
\newcommand{\relent}[2]{\mathrm{D}(#1\|#2)}
\newtheorem{theorem}{Theorem}
\newtheorem{lemma}[theorem]{Lemma}
\newtheorem{prop}[theorem]{Proposition}
\theoremstyle{definition}
\theoremstyle{remark}
\DeclareMathOperator{\Bin}{Bin}
\DeclareMathOperator{\Bern}{Bern}
\newcommand{\mem}{\textsc{memory}\xspace}
\title{Time-Space Tradeoffs for the Memory Game}
\author{Amit Chakrabarti\thanks{Department of Computer Science, Dartmouth College, Hanover, NH. Supported in part by NSF under award CCF-1650992.}
\and Yining Chen\thanks{Department of Computer Science, Dartmouth College, Hanover, NH.}}
\date{}
\begin{document}

\maketitle
\thispagestyle{empty}

\begin{abstract}
A single-player game of \textsc{memory} is played with $n$ distinct pairs of cards, with the cards in each pair bearing identical pictures. The cards are laid face-down. A move consists of revealing two cards, chosen adaptively. If these cards match, i.e., they bear the same picture, they are removed from play; otherwise, they are turned back to face down. The object of the game is to clear all cards while minimizing the number of moves. Past works have thoroughly studied the expected number of moves required, assuming optimal play by a player has that has {\em perfect} memory. In this work, we study the \textsc{memory} game in a {\em space-bounded} setting.

We prove two time-space tradeoff lower bounds on algorithms (strategies for the player) that clear all cards in $T$ moves while using at most $S$ bits of memory. First, in a simple model where the pictures on the cards may only be compared for equality, we prove that $ST = \Omega(n^2 \log n)$. This is tight: it is easy to achieve $ST = O(n^2 \log n)$ essentially everywhere on this tradeoff curve. Second, in a more general model that allows arbitrary computations, we prove that $ST^2 = \Omega(n^3)$. We prove this latter tradeoff by modeling strategies as branching programs and extending a classic counting argument of Borodin and Cook with a novel probabilistic argument. We conjecture that the stronger tradeoff $ST = \widetilde{\Omega}(n^2)$ in fact holds even in this general model.

\bigskip

\noindent{\bf Keywords:~} time-space tradeoffs; branching programs; matchings; probabilistic method

\end{abstract}

\newpage
\setcounter{page}{1}


\section{Introduction}

The popular children's card game ``Memory'' (also known as ``Concentration'') is played with a deck of $n$ pairs of picture cards: the two cards in each pair have the same picture, while two cards taken from two distinct pairs have distinct pictures. The game starts with these cards facing down on a table. A {\em move} in the game consists of flipping up one card to reveal it, and then flipping up a second card: if the two revealed cards are from the same pair---i.e., they {\em match}---they are removed from the table, otherwise they are flipped back to face down. The game ends when all cards have been removed. From these basic rules, one can formulate a two-player or multi-player game, but this paper is concerned with the one-player ``Solitaire Memory'' game, which we shall simply call \mem in this paper. The goal of \mem is to remove all cards as quickly as possible, i.e., using a minimum number of moves.

Versions of this game have attracted some attention from researchers in the recent past. Alfthan~\cite{Alfthan07} studied strategies for two-player versions of the game. Foerster and Wattenhofer~\cite{FoersterW13} studied the solitaire game and derived the first nontrivial upper and lower bounds on the expected number, $T(n)$, of moves required by an optimal strategy. Subsequently, Velleman and Warrington~\cite{VellemanW13} proved the very tight result that $T(n) = (3 - 2\ln 2)n + 7/8 - 2\ln 2 + o(1) \approx 1.61n$. However, all of these works assume that the player(s) have {\em perfect} memory. Arguably, what makes the children's game Memory interesting is that player(s) will find it hard to remember everything they have learned from previous moves. Thus, from a computational complexity perspective, the most interesting question about \mem is: what is the minimum expected number of moves required by a player whose memory capacity is at most $S$ bits?

\subsection{Our Results and Techniques}

Define an $(S,T)$-strategy to be one where the player flips at most $T = T(n)$ cards, while using at most $S = S(n)$ bits of memory throughout. Notice that we are measuring ``time'' in terms of cards flipped (rather than moves made): this will simplify the presentation and is accurate enough for our purposes, since we shall prove asymptotic results. We shall soon make this definition more formal. For now, we note that we can readily obtain deterministic $(S,T)$-strategies with $ST = \tO(n^2)$, for essentially all $S$ between $\tTheta(1)$ and $\tTheta(n)$.\footnote{The $\tO(\cdot)$ and $\tTheta(.)$ notations ignore factors polylogarithmic in $n$; in this particular instance, the factor ignored is $O(\log n)$.} We assume that the picture on each card can be represented using $\tO(1)$ bits.

The main goal of this work is to study corresponding lower bounds, i.e., time-space tradeoffs. We prove two such tradeoffs. The first applies to a simple model in which the player is deterministic and can do only one thing with the pictures on the cards, namely, compare them for equality. Under this restriction, we show that $ST = \tOmega(n^2)$ is necessary; see \Cref{thm:blind}. We do this by analyzing an explicit adversarial strategy, using concepts from matching theory. The aforementioned upper bound $ST = \tO(n^2)$ is achieved by a strategy that does operate in this simple model, so this lower bound is tight.

Our second tradeoff (and main result) applies to a fully general model, where the player may use randomization and may treat the picture on each card as an integer and perform arbitrary computations with these integers.
We prove that every $(S,T)$-strategy must satisfy $ST^2 = \Omega(n^3)$; see \Cref{thm:main} in \Cref{sec:lb-setup}. We do this by modeling a strategy for \mem as a general branching program (BP): there are $2n$ input variables corresponding to the $2n$ cards; each node in the BP reads a variable (flips a card); and each edge in the BP has the potential to produce output (declare some pairs of cards as matched). To prove the appropriate time-space tradeoff for BPs that successfully play \mem, we extend a classic counting argument due to Borodin and Cook~\cite{BorodinC82} with a novel probabilistic argument of our own.

\subsection{Related Work}
\label{RelatedWork}

We have already discussed past work on the Memory game itself~\cite{Alfthan07,FoersterW13,VellemanW13}. As those works do not deal with a space-bounded computational setting, at the technical level they are largely unconnected to our work. The relevant related work is mostly in the area of time-space tradeoffs for branching programs. For a detailed overview of this area and an excellent exposition of many key results, we refer the reader to Chapter 10 of the textbook by Savage~\cite{Savage:1997:MCE:522806}. Our work was motivated in part by a goal of furthering our understanding of space-limited algorithms for computing matchings in graphs: \mem is a toy version of this much richer problem. Thus, the recent and growing body of work on streaming algorithms and lower bounds for computing matchings in graphs is also relevant: see, e.g., Assadi et al.~\cite{AssadiKL17}, Crouch and Stubbs~\cite{CrouchS14}, and the references therein. We discuss this motivation further at the end of the paper, in Section~\ref{sec:concl}.

In a highly influential work, Borodin and Cook~\cite{BorodinC82} studied the \textsc{sorting} problem: given integers $x_1, x_2 \dots, x_n \in [n^3]$, output the $x_i$ values in sorted order.\footnote{The notation $[N]$ denotes the set $\{1,2,\ldots,N\}$.}
They devised a counting-based method for deriving time-space lower bounds for branching programs, using which they proved the tradeoff $ST = \Omega(n^2/\log{n})$ for \textsc{sorting}. Their method's essential feature is that it divides time into ``stages'' and applies a counting argument to each stage to argue that, if the branching program is ``too small,'' then none of the stages can make enough progress on sufficiently many inputs. We shall see this overall scheme in the proof of our main theorem. The Borodin-Cook method has been applied to many other problems, including matrix multiplication over a finite field~\cite{Yesha84}, generalized string matching~\cite{Abrahamson87}, Boolean matrix multiplication~\cite{Abrahamson90}, calculating universal hash functions~\cite{MansourNT93}, and \textsc{unique-elements}~\cite{Beame91}.

 The Borodin-Cook method is described in detail in Savage's book~\cite{Savage:1997:MCE:522806}; in particular, Theorem 10.11.1 in that book gives a very general version of the method. In the terminology of that theorem, the conditions under which this method applies are captured by a property of functions called $(\phi, \lambda, \mu, \nu, \tau)$-distinguishability. The function that describes the desired output of the game \mem is not $(\phi, \lambda, \mu, \nu, \tau)$-distinguishable for any positive constant $\nu$, so this method does not apply as is to our problem.

It is worth comparing the quality of our branching program lower bound to known results on \textsc{sorting} and \textsc{unique-elements}. When the ``pictures'' on the cards are just the integers in $[R]$, \mem can be thought of as a special case of \textsc{sorting} where only a subset of $[R]^{2n}$ consisting of $n$ equal pairs are valid inputs. An algorithm for \textsc{sorting} can be used to output all $n$ pairs in order of increasing variable values, effectively generating the output for \mem. In \textsc{unique-elements}, the input consists of $n$ integers $x_1, x_2, \ldots, x_n \in [n]$ and the desired output is a list of all $i$ such that the value $x_i$ appears exactly once in the input. Consider the following closely related problem that we call \textsc{unique-pairs}: given an input of $2n$ integers in $[n]$, output all pairs $(i,j)$ with $i < j$ such that $x_i = x_j$ and no other $k \in [2n]$ satisfies $x_i = x_j = x_k$. Then, with minor modifications, the analysis that Beame gives for \textsc{unique-elements}~\cite{Beame91} still applies, so this variant has $ST = \Omega(n^2)$. (See Appendix~\ref{unique2} for some details of how to modify Beame's proof.) So playing \mem is an easier task than both \textsc{sorting} and \textsc{unique-elements}, and thus it is harder to prove a lower bound. Indeed, so far we are only able to show $ST^2 = \Omega(n^3)$ for our problem, instead of the stronger $ST = \Omega(n^2)$ bound known for these related problems.

Our other lower bound for \mem, which does give an optimal tradeoff of $ST = \Omega(n^2 \log n)$, applies in a weaker model where cards may only be compared for equality. Its proof is based on a direct adversarial argument in the style of classic lower bounds for deterministic query complexity; see, e.g., Chapter~5 of the textbook by Du and Ko~\cite{DuKo-book}.


\section{Preliminaries} \label{sec:prelim}

Without loss of generality, we may assume that the $2n$ cards in a game of \mem are laid down on the table linearly. We model our input as a $(2n)$-tuple of variables $\bx = (x_1,\ldots,x_{2n})$, where $x_i$ is the ``picture'' on the $i$th card. We think of each picture as an integer in $[R] := \{1,\ldots,R\}$. We model the act of flipping the cards as reading the values of these variables. Per the rules of \mem, the only {\em valid} tuples $\bx$ are those where exactly $n$ distinct values occur exactly twice each. We let $\cX$ denote the set of valid inputs. Given $\bx \in \cX$, a {\em match} in $\bx$ is a triple $(i,j,v)$ where $1 \le i < j \le 2n$, $1 \le v \le R$, and $x_i = x_j = v$. Clearly, every valid $\bx$ has exactly $n$ matches. The goal of \mem is to output all $n$ matches---this is how we model the act of removing all cards---under the promise that the input is valid.

It is reasonable to assume that $R = n^{O(1)}$: indeed, if $R$ were larger, we could simply hash each picture down to a $\ceil{3 \log n}$-bit string using a $2$-universal hash function and this would lead to a collision with probability at most $O(1/n)$. We are requiring the output to specify not just the matching pairs of cards $(i,j)$ but also the integer values shown on those cards. This, too, is reasonable because it costs at most $n$ additional variable reads to satisfy this requirement, and at least $n$ variables must be read by any correct strategy.

For the rest of this paper, we shall study computational complexity in worst-case settings for inputs. More precisely, an $(S,T)$-strategy is required to use at most $S$ bits of memory. In the deterministic case, the strategy must always terminate with a correct output after reading at most $T$ variables. In the randomized case, it must always terminate with a correct output and the expected number of variables it reads must be at most $T$; thus, this is a Las Vegas notion of randomization.

We proceed to formally define two computational models for playing \mem using limited memory (space). We then present a basic, and easy to prove, upper bound that applies to the weaker of these models. We shall eventually prove a tight lower bound in this weaker model and a non-tight lower bound in the stronger model.

\subsection{Computational Models}

\paragraph{Branching Programs.~} Our strongest model is that of {\em general branching programs} on the variables $x_1, \ldots, x_{2n}$. Recall that each variable takes values in $[R]$. An $R$-way branching program (BP) is a directed acyclic graph where each node is labeled with one of the variables $x_i$; there are exactly $R$ out-edges from each non-sink node, labeled with the $R$ different possible values for $x_i$; an edge is additionally annotated with zero or more outputs; and there is a single {\em source node}. Given an input $\bx$, the execution of the BP on $\bx$ starts at the source node. At each time step, the algorithm reads the variable $x_i$ labeling the {\em current node} and, based on its value, it branches to one of the $R$ successors of the node. In the process, it produces all the outputs that annotate the edge traversed. If the algorithm moves to a {\em sink node}, then it halts.

We can model an $(S,T)$-strategy for \mem as an $R$-way {\em layered} branching program $\cB$. This is simply a branching program whose nodes are arranged in layers, numbered from $0$ to $T$, with each edge going from one layer to the next. The source node is in layer $0$ and every node in layer $T$ is a sink. Further, since the strategy is limited to $S$ bits of working memory (space), each layer has at most $2^S$ nodes. It is conventional to say that $\cB$ has length $T$ and width at most $2^S$.

On every input $\bx = (x_1,\ldots,x_{2n}) \in [R]^{2n}$, running $\cB$ on $\bx$ will trace a source-to-sink path through its nodes, outputting some triples in the process. Correctness requires that for every valid input $\bx \in \cX$, exactly $n$ outputs are produced, namely, the $n$ matches in $\bx$.

\paragraph{The Blind Player Model.~} Since the two main actions in \mem are remembering some cards and comparing two cards for equality, the game can also be studied in a weaker model where such equality comparisons are the {\em only} thing one is allowed to do with the pictures on the cards. The BP model above allows a strategy to perform arbitrary computations (e.g., arithmetic) with the integers written on the cards. In contrast, in a {\em blind player model}, a strategy cannot see the pictures on the cards directly, but can only know whether or not two such pictures are equal. A strategy can, however, manipulate other relevant integers, such as the index positions at which particular cards appear.

More formally, an blind player $(S,T)$-strategy maintains a {\em working set} $\cS \subseteq [2n]$: its elements are the indices of all cards ``remembered'' by the strategy. The set $\cS$ is initially empty, and such that $|\cS| \log n \le S$ at all times, so that $\cS$ fits in $S$ bits of memory. In each time step, the strategy {\em examines} a card. Suppose it examines the $i$th card: then an oracle instantaneously informs it whether $x_i = x_j$, for each $j \in \cS$. Based on this information, the strategy outputs zero or more values and updates $\cS$ to a subset of $\cS \cup \{i\}$.

\subsection{A Basic Upper Bound} \label{sec:ub}

We proceed to analyze a natural strategy, implementable in the blind player model, that sets the benchmark upper bound $ST = O(n^2 \log n)$.

\begin{prop} \label{prop:ub}
  In the blind player model described above, for all $S$ with $\log n \le S \le n\log n$, \mem has an $(S,T)$-strategy where $ST = O(n^2 \log n)$.
\end{prop}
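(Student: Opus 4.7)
The plan is to exhibit a simple phased strategy that processes the cards block by block. Set the block size $k = \max\bigl(1, \lfloor (S - c\log n)/\log(2n)\rfloor\bigr)$ for a small constant $c$, so that $k$ indices (each of size $\lceil\log(2n)\rceil$ bits) plus $O(1)$ counters fit comfortably inside $S$ bits. Partition the positions into $P = \lceil 2n/k\rceil$ contiguous blocks $W_1,\ldots,W_P$, each of size at most $k$. Phase $p$ proceeds in two stages. In the \emph{load stage}, the strategy examines the cards of $W_p$ in order; for each examined card $j$, the oracle reveals whether $x_j = x_i$ for any $i\in\cS$. If such an $i$ exists, the match $(i,j)$ is output and $i$ is removed from $\cS$; otherwise $j$ is inserted into $\cS$. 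In the \emph{scan stage}, the strategy examines positions $pk+1, pk+2, \ldots, 2n$ in order and performs the same check, outputting and removing whenever a match in $\cS$ is found. At the end of the phase, $\cS$ is cleared to empty.

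For correctness, fix a match $(i,j)$ with $i < j$ and let $p$ be the unique phase with $i\in W_p$. If $j\in W_p$ also, then $i$ is inserted into $\cS$ during the load stage and the match is produced moments later when $j$ is examined. Otherwise $j$ lies in some later block $W_r$ with $r > p$, in which case $i$ remains in $\cS$ throughout the load stage and is still there when position $j$ is visited during phase $p$'s scan stage, producing the match. Conversely, no match is ever output twice: in phase $r$, the card $j$ is inserted into $\cS$, but its partner $i$ sits at position $i \le pk < rk$ and is therefore never examined in phase $r$'s scan stage, so $j$ merely becomes a harmless stale entry that is discarded by the end-of-phase clear.

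For the cost, at every moment $\cS$ contains only indices of cards from the current block, so $|\cS|\le k$ and the $S$-bit memory constraint is respected. Phase $p$ reads $k + (2n - pk) = 2n - (p-1)k$ variables, so summing over all phases gives
\[
T \;\le\; \sum_{p=1}^{P}\bigl(2n - (p-1)k\bigr) \;=\; 2nP - k\binom{P}{2} \;=\; O\!\left(\frac{n^2}{k}\right).
\]
Substituting $k = \Theta(S/\log n)$ yields $T = O(n^2 \log n / S)$, and hence $ST = O(n^2 \log n)$ across the stated range $\log n \le S \le n\log n$.

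The only mild subtlety is the presence of \emph{stale} cards: a card in $W_p$ whose partner lies in some earlier block $W_q$ was already output in phase $q$'s scan stage, yet the blind strategy has no way to detect this when it later loads $W_p$. The resolution is to rely on the invariant that, by the end of phase $p$, every match whose smaller endpoint lies in $W_p$ has already been produced; clearing $\cS$ unconditionally at each phase boundary is therefore both safe (no live matches are lost) and sufficient (stale entries cannot accumulate or cause duplicate outputs in later phases). This observation is what keeps the memory bounded by $k$ indices throughout and decouples the correctness argument from any explicit bookkeeping of previously matched cards.
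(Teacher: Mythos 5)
Your proposal is correct and is essentially the paper's own strategy: partition the cards into blocks of size $\Theta(S/\log n)$, and in each pass store one block in the working set while scanning all later positions, giving $T = O(n^2 \log n / S)$ and hence $ST = O(n^2\log n)$. The additional bookkeeping you supply (removing matched entries, the stale-entry/duplicate-output discussion, and the $O(\log n)$ allowance for counters) is sound but only refines details that the paper's proof treats implicitly.
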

\begin{proof}
  Let $s := \floor{S/\log n}$ be the number of cards that will fit in memory. The algorithm proceeds in {\em passes} and each pass begins by clearing the working set to $\varnothing$. In pass $i$, the player reads $x_{(i-1)s+1}, \ldots, x_{is}$ and stores them all in memory, outputting any matches found. The player then reads $x_{is+1}, \ldots, x_{2n}$ in order, forgetting each card after it is read, and outputting any matches found between these cards and the ones stored in memory. Clearly, every match in the input is eventually found and each pass takes $O(n)$ time.
  
  This algorithm makes at most $\ceil{2n/s}$ passes, for a total time complexity of $T = O((2n/s)\cdot n)$. This gives $ST = O(n^2 \log n)$, as desired.
\end{proof}

Note that our model does not account for the space used by a strategy to keep track of its progress, e.g., through loop indices. This is just to keep things simple: had we accounted for this, the space usage of the above strategy would increase by only an additive $O(\log n)$.

\section{Warm-up: A Lower Bound for a Blind Player}

We shall now prove that the simple upper bound in \Cref{prop:ub} is tight in the blind player model. In fact, we can prove something stronger. Consider an $(S,T)$-strategy for \mem and let $s := \floor{S/\log n}$ as above. Since each examination of a card reveals the answer to at most $s$ queries of the form ``$x_i = x_j$?'', it suffices to prove that $\Omega(n^2)$ queries must be made by any {\em query strategy} that can access information about $\bx$ only through such pairwise queries. It then follows that $T = \Omega(n^2/s)$, giving $ST = \Omega(n^2 \log n)$.

To analyze such a query strategy, consider the {\em knowledge graph} $G$ of the algorithm, defined as follows: $G$ is an undirected graph with vertices $1, 2, \ldots, 2n$ (the indices of the cards); edge $\{i,j\}$ occurs in $G$ iff $x_i = x_j$ is consistent with the validity of the input $\bx$ and all answers to queries received by the algorithm. This means that, at all times, every edge of $G$ is {\em useful}, where a {\em useless edge} is defined to be one that does not belong to any perfect matching. (In the terminology of matching theory, each connected component of $G$ is $1$-extendable.) Further, as the algorithm proceeds, $G$ can only lose edges. The algorithm can conclude that $x_i = x_j$ iff $\{i,j\}$ is an isolated edge in $G$. It is done when and only when $G$ becomes a perfect matching.

We shall also allow the algorithm the following information for free: we can only have $x_i = x_j$ if exactly one of $i$ and $j$ is in $[n]$. This means that, at the start of the algorithm, the knowledge graph $G$ is a complete bipartite graph with ``left side'' $L = [n]$ and ``right side'' $R = [2n] \setminus [n]$.

\begin{theorem} \label{thm:blind}
  Every query strategy for \mem using pairwise equality queries must make $\Omega(n^2)$ queries in total. Therefore, an $(S,T)$-strategy in the blind player model requires $ST=\Omega(n^2 \log n)$.
\end{theorem}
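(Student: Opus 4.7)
The plan is to exhibit an explicit adversary strategy that forces any query strategy using pairwise equality queries to make $\Omega(n^2)$ queries. The argument is a potential-function argument on the knowledge graph $G$: since $|E(G_0)| = n^2$ (complete bipartite) and $|E(G_T)| = n$ (perfect matching), exactly $n^2 - n$ edges are removed from $G$ over the course of the game. The goal is to show that each query removes $O(1)$ edges on average; summing this then yields $T \geq (n^2 - n)/O(1) = \Omega(n^2)$.

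For each query $\{l,r\} \in E(G)$, the adversary has two possible answers. Answering ``no'' removes at least the queried edge from $G$, plus any other edges that become useless (i.e., no longer in any perfect matching of $G - \{l,r\}$). Answering ``yes'' confirms $(l,r)$ as a match and removes all edges incident to $l$ or $r$, contributing $\deg_G(l) + \deg_G(r) - 1$ to the edge loss. The adversary picks whichever answer is consistent with the input promise and minimizes the edge loss. When the adversary can answer ``no'' without causing any cascade, the edge loss is exactly $1$; this is the case when $G - \{l,r\}$ remains $1$-extendable, i.e., every remaining edge still lies in some perfect matching.

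The hard part is analyzing the ``yes'' answers and cascade events, which can each remove many edges. There are at most $n$ ``yes'' answers (one per matched pair). The key technical lemma I aim to prove is that, by maintaining a suitable invariant on the structure of $G$ (for example, a disjoint union of dense $1$-extendable components, analyzed via Hall's theorem and the Dulmage--Mendelsohn decomposition of elementary bipartite graphs), the adversary can arrange matters so that each ``yes'' event or cascade incurs only $O(1)$ edge loss. This bounds the total ``yes''/cascade cost by $O(n)$, giving $T \cdot 1 + O(n) \geq n^2 - n$ and hence $T = \Omega(n^2)$. The claimed blind-player bound $ST = \Omega(n^2 \log n)$ then follows immediately from the reduction explained in the preamble. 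The main obstacle is verifying this structural lemma rigorously: it requires a carefully designed adversarial strategy that keeps $G$ ``rich'' enough throughout the game to avoid expensive cascades, together with a nontrivial matching-theoretic argument showing that the forced ``yes'' moments coincide with configurations of bounded vertex degree.
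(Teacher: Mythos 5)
Your setup (knowledge graph, adversary answers, edges disappearing either by an explicit answer or by becoming useless) matches the paper's, but the accounting you propose is exactly where the argument breaks. The claim that the adversary can keep the loss per query at $O(1)$ on average is not something you prove, and it cannot be obtained as a general structural fact about $1$-extendable bipartite graphs: the paper itself notes that deleting the single edge $\{n,n+1\}$ from the graph with edge set $\{\{n,n+1\}\}\cup\{\{i,n+j\}:\,1\le i\le j\le n\}$ vanishes $\Omega(n^2)$ edges, so a ``no'' answer can be catastrophically expensive. Your proposed escape---answer ``yes'' in such moments and argue that forced-``yes'' configurations have bounded degree---is precisely the unproven lemma you flag, and it is doubtful. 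Working in the standard digraph encoding of an elementary bipartite graph (contract a perfect matching and orient the non-matching edges; an edge is useful iff its arc lies on a directed cycle), one can build configurations in which the queried edge corresponds to an arc $s\to t$ whose removal kills $\Theta(n)$ arcs (a long directed path whose only return arc is $s\to t$) while $s$ and $t$ also carry $\Theta(n)$ additional arcs in $2$-cycles to side vertices, so that the ``yes'' answer also costs $\Theta(n)$ edges. Thus both answers can be expensive at a single query, and your plan would need a carefully maintained invariant ruling such configurations out against an adversarially chosen query sequence---something you neither state nor prove. As written, the proposal is a plan with its central lemma missing, not a proof.

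The paper avoids per-query accounting altogether. Its adversary simply answers ``no'' whenever consistent, and the $\Omega(n^2)$ bound comes from a global charging argument: after renumbering so that the final matching is $M=\{\{i,n+i\}\}$, pair each non-matching edge $e=\{i,n+j\}$ with $\phi(e)=\{j,n+i\}$. As long as both edges of a pair are present, they lie together in the perfect matching $(M\setminus\{\{i,n+i\},\{j,n+j\}\})\cup\{e,\phi(e)\}$, so neither can vanish; hence the first edge of each pair to disappear must be removed by an explicit query. This yields at least $n(n-1)/2$ queries with no need to control cascade sizes at all. If you want to salvage your approach, you would have to replace the per-step $O(1)$ bound by some such global argument; trying to bound the damage of individual queries is the wrong granularity for this problem.
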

\begin{proof}
We shall prove our lower bound by an adversarial argument, using the following adversary. Every time the query algorithm asks whether $x_i = x_j$, the adversary answers ``No'' unless consistency requires a ``Yes'' answer. This means that the knowledge graph $G$ evolves as follows, with each query made.
\begin{itemize}
  \item The algorithm asks ``Is $x_i = x_j$?'' for some non-isolated edge $\{i,j\}$ in $G$. The adversary answers ``No'', causing this edge to be {\em deleted} from $G$.
  \item The algorithm considers each remaining edge in $G$ and checks whether it is still useful. The edges which are not useful are said to {\em vanish} from $G$.
\end{itemize}
When the algorithm finishes, $G$ must reduce to a perfect matching: call it $M$. Further, exactly $n(n-1)$ edges must have been either deleted or vanished from $G$. Each query results in exactly one deletion, so it suffices to prove that $\Omega(n^2)$ edges get deleted by the time $G$ becomes $M$. The trouble is that a deletion {\em could} be accompanied by up to $\Omega(n^2)$ vanishings: for instance, consider the deletion of the edge $\{n,n+1\}$ from the bipartite graph whose edge set is $\{\{n,n+1\}\} \cup \{\{i,n+j\}:\, 1 \le i \le j \le n\}$.

The key is to observe that the edges outside $M$ can be partitioned into pairs so that, within each pair, the first of the edges to be removed from $G$ must be deleted, rather than vanished. To formalize this, we first renumber the vertices so that final matching $M$ is $\{\{i,n+i\}:\, 1 \le i \le n\}$. Now, for each $e \notin M$ in the initial complete bipartite graph, define $\phi(e)$ as follows:
\[
  \text{If } e = \{i,n+j\}, \text{ then } \phi(e) := \{j,n+i\} \,.
\]
On the edges outside $M$, this mapping $\phi$ is an involution with no fixed points, so it partitions those edges into pairs, as required.
\medskip

\noindent {\em Claim.~} For all $e \notin M$, either $e$ or $\phi(e)$ was deleted from $G$, and not vanished.

\noindent {\em Proof of Claim.~} Suppose not. Then, either $e$ and $\phi(e)$ were vanished as a result of the same query, or they were vanished as a result of two different queries. In the former case, consider the state of $G$ immediately after the deletion of the edge involved in this query. At this point $G$ contained both $e$ and $\phi(e)$, and so both these edges belonged to the perfect matching $(M \setminus \{\{i,n+i\},\{j,n+j\}\}) \cup \{e,\phi(e)\}$. Therefore both these edges were useful and should not have vanished, a contradiction.

In the latter case, suppose WLOG that $e$ was the first of the two edges to vanish. Consider the state of $G$ immediately after the deletion of the edge involved in the query that caused $e$ to vanish. Arguing as above, $e$ is useful at this point and so it should not have vanished, a contradiction. This proves the claim.
\medskip

Therefore, the total number of edges deleted from $G$ is at least $n(n-1)/2=\Omega(n^2)$, as required.
\end{proof}


\section{The Main Lower Bound} \label{sec:lb}

\subsection{The Overall Setup} \label{sec:lb-setup}

Let $\cB$ be a layered branching program of length $T$ and width at most $2^S$ that correctly plays the Memory Game. Let $r$ and $t$ be positive-integer-valued parameters to be chosen later. We divide $\cB$ into {\em stages}, where each stage, except perhaps the last, consists of $r+1$ consecutive layers of nodes of $\cB$, and the final layer of nodes of stage~$i$ equals the first layer of nodes of stage~$i+1$. Call a stage {\em productive} for an input $\bx \in \cX$ if running $\cB$ on $\bx$ produces at least $2t$ outputs in that stage. Recall, from \Cref{sec:prelim}, that $\cX$ denotes the set of {\em valid} inputs.

Assume, towards a contradiction, that the number of stages is at most $n/(2t)$. Then, for every input $\bx$, there exists a stage productive for $\bx$. Therefore, there exists a stage $i$ productive for a set $\cX'$ of inputs, where $|\cX'| / |\cX| \ge 2t/n$. Consider the first layer of nodes in this stage. There are at most $2^S$ such nodes, so one of these nodes---$v$, say---is reached by at least a $2^{-S}$ fraction of the inputs in $\cX'$.

Consider the subprogram of $\cB$ with source node $v$, consisting of only the nodes in stage $i$. Unfold this subprogram into a decision tree with output (by replicating nodes as needed) and, if necessary, increase its depth to $r$ by querying some dummy variables. Let $\cT$ be the resulting tree. Then $\cT$ produces at least $2t$ correct outputs for a $(2t/n)2^{-S}$ fraction of inputs in $\cX$.

Set $r = \floor{(2/e)\sqrt{nt}}$. The main technical argument in our proof shows that a tree with this depth is ``too shallow'' and is therefore quite unproductive. Specifically, by \Cref{lem:productive-tree}, which we prove below, the fraction of inputs in $\cX$ for which $\cT$ correctly produces at least $2t$ outputs is at most $e^{-t} + (n/2)^{-t}$. Therefore,
\[
  \frac{2t}{n}\cdot 2^{-S} \le e^{-t} + \left(\frac{n}{2}\right)^{-t} \,.
\]
Setting $t = S$ gives us a contraction.

Therefore, for the above choices of $r$ and $t$, the branching program $\cB$ must have more than $n/(2t)$ stages. Since each non-last stage has length $r$, it follows that
\[
  T \ge \frac{nr}{2t} = \frac{n\floor{(2/e)\sqrt{nS}}}{2S} = \frac{\Omega(n^{3/2})}{\sqrt S} \,,
\]
which proves the tradeoff $T\sqrt S = \Omega(n^{3/2})$. This outlines proves our main result.

\begin{theorem}[Main Theorem] \label{thm:main}
  Any deterministic branching program of length $T$ and width at most $2^S$ that correctly plays the Memory Game must obey the tradeoff $ST^2 = \Omega(n^3)$.
\end{theorem}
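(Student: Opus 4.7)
Since \Cref{sec:lb-setup} has already reduced the theorem to the deferred \Cref{lem:productive-tree}---which bounds the fraction of $\bx\in\cX$ on which a depth-$r$ decision tree with output produces at least $2t$ correct outputs by $e^{-t}+(n/2)^{-t}$---my plan is to prove that lemma. With the lemma in hand, choosing $t$ of order $\max(S,\log n)$ violates the displayed inequality $(2t/n)2^{-S}\le e^{-t}+(n/2)^{-t}$, forcing the number of stages above $n/(2t)$ and hence $T\ge nr/(2t)=\Omega(n^{3/2}/\sqrt{t})$, which yields $ST^2=\Omega(n^3)$.

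For the lemma, I would classify each output triple $(i,j,v)$ along a root-to-leaf path $P$ of $\cT$ by how many of $i,j$ are queried on $P$: type-$1$ (both), type-$2$ (exactly one), type-$3$ (neither), with counts $k_1,k_2,k_3$. Let $p$ and $s=r-2p$ denote the numbers of values appearing twice, resp.\ once, among the $r$ query-answers on $P$. Model a uniform $\bx\in\cX$ as a uniform perfect matching of $[2n]$ together with a uniform injective labelling of its $n$ pairs by elements of $[R]$. Two ingredients then suffice: (i) the probability $q(P):=\Pr[\bx\text{ follows }P]$ depends only on $(p,s)$, by position-symmetry of the valid-input distribution; and (ii) a direct completion count shows $c(P):=\Pr[\text{outputs on }P\text{ all correct}\mid\bx\text{ follows }P]\le 2^{k_3}/[(2n-r)(2n-r-1)\cdots(2n-r-k_2-2k_3+1)]\le(2/n)^{k_2+k_3}$, valid whenever $r\le n$.

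I would then split the lemma's probability into two cases. Case~A ($p\ge t$): by (i), the distribution of $p$ under uniform $\bx\in\cX$ is identical for every depth-$r$ decision tree, so $\Pr[p\ge t]$ reduces to a closed-form sum independent of the tree's shape. With $r=\floor{(2/e)\sqrt{nt}}$ we have $r^2/(4n)=t/e^2$; Stirling's formula gives the dominant term $(t/e^2)^t/t!\approx e^{-t}/\sqrt{2\pi t}$, and successive terms decay by a ratio $\approx 1/e^2$, so the whole sum absorbs to $\Pr[p\ge t]\le e^{-t}$. Case~B ($p<t$ yet the path outputs $\ge 2t$ correct triples): since every type-$1$ output requires a value appearing twice in $Q_P$, we have $k_1\le p<t$, hence $k_2+k_3>t$. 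For each such $P$, $q(P)c(P)\le q(P)(2/n)^{t+1}$; summing over all such paths gives at most $(2/n)^{t+1}\sum_P q(P)\le(n/2)^{-t}$. Adding the two cases proves the lemma.

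The main obstacle will be the completion count in~(ii): type-$2$ and type-$3$ outputs impose intertwined constraints on a uniform random extension of $Q_P$, because type-$3$ positions must lie outside the image of the random singleton-to-external injection that realises the type-$2$ outputs. The plan is to decouple them by first restricting the singleton injection to satisfy both the type-$2$ value equalities and disjointness from type-$3$ positions; once that injection is fixed, the residual random matching on the remaining external positions and the random value-injection decouple cleanly and can be counted by the standard perfect-matching and falling-factorial formulas. A secondary delicacy is that the constant $2/e$ in the choice of $r$ must exactly match the target exponent $e^{-t}$: any larger constant would give only $c^{-t}$ for some $c<e$, insufficient to balance the $(2t/n)2^{-S}$ term in the contradiction step.
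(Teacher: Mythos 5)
Your proposal follows essentially the same route as the paper's proof: the identical stage-to-decision-tree reduction from \Cref{sec:lb-setup}, then the same dichotomy inside \Cref{lem:productive-tree} between inputs whose $r$ queried positions already contain at least $t$ complete pairs (your Case~A, bounded by $e^{-t}$ via a direct hypergeometric/Stirling estimate, which is just an alternative derivation of the tail bound the paper gets in \Cref{lem:exp-bound} from negative association plus a Chernoff bound, together with the same position-symmetry observation that the distribution of this count is tree-independent) and paths that must correctly ``guess'' at least $t+1$ matches involving unqueried positions (your Case~B, bounded by a falling-factorial completion count giving $(n/2)^{-t}$, where the paper gets $(n-r-t)^{-t}$ by counting partner positions for type-2 outputs and values for type-3 outputs rather than your matching-probability count). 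The argument is correct at the same level of rigor as the paper's own proof, so this is essentially the same proof with only cosmetic variations in the two probability estimates.
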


In the rest of \Cref{sec:lb}, we fill in the necessary details to formally prove \Cref{thm:main}. In the sequel, we indicate how to  generalize the lower bound to randomized strategies.

\subsection{A Probability-Theoretic Lemma}

Our eventual proof of the unproductivity of shallow decision trees crucially hinges on a probability-theoretic lemma that we now develop. Suppose that $r$ elements are chosen, without replacement, from the set $[n] \times \b$, forming a random subset $A$. Define the random variable
\[
  Y = |\{j \in [n]:\, (j,0) \in A \text{ and } (j,1) \in A\}| \,.
\]
Note that $Y = \sum_{j=1}^n I_j$, where $I_j = \indic_{(j,0) \in A \wedge (j,1) \in A}$. These indicator random variables $I_1, \ldots, I_n$ are clearly not independent. Nevertheless, we can prove the following strong tail estimate on $Y$.

\begin{lemma} \label{lem:exp-bound}
  For all $t \ge 1$,
  \[
    \Pr[Y \ge t] \le \exp\left(-t \ln \frac{4nt}{er^2} \right) \,.
  \]
  In particular, for $r \le (2/e)\sqrt{nt}$, we have $\Pr[Y \ge t] \le e^{-t}$.
\end{lemma}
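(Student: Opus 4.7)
The plan is to prove the tail bound by a clean union bound, exploiting the fact that $Y \ge t$ forces some size-$t$ subset $S \subseteq [n]$ to have all its pairs completely contained in $A$.

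First, I would write $Y \ge t$ as the event that there exists $S \subseteq [n]$ with $|S| = t$ such that $I_j = 1$ for every $j \in S$, where $I_j = \indic_{(j,0) \in A \wedge (j,1) \in A}$ as in the setup. By a straightforward union bound over the $\binom{n}{t}$ choices of $S$, and by symmetry of the uniform distribution on size-$r$ subsets of $[n] \times \b$, this gives
\[
  \Pr[Y \ge t] \;\le\; \binom{n}{t} \cdot \Pr\bigl[I_1 = I_2 = \cdots = I_t = 1\bigr].
\]

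Next, I would evaluate the probability on the right. The event $I_1 = \cdots = I_t = 1$ is the event that $2t$ specified elements of $[n] \times \b$ all lie in the random size-$r$ subset $A$ of a ground set of size $2n$. Standard hypergeometric counting gives
\[
  \Pr[I_1 = \cdots = I_t = 1] \;=\; \frac{\binom{2n-2t}{r-2t}}{\binom{2n}{r}} \;=\; \prod_{i=0}^{2t-1} \frac{r-i}{2n-i}.
\]
Since $r \le 2n$, each ratio satisfies $(r-i)/(2n-i) \le r/(2n)$, so the product is at most $(r/(2n))^{2t}$. (If $r < 2t$ this probability is zero and the bound is trivial.)

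Combining these with the standard estimate $\binom{n}{t} \le (en/t)^t$ and simplifying yields
\[
  \Pr[Y \ge t] \;\le\; \left(\frac{en}{t}\right)^t \left(\frac{r}{2n}\right)^{2t} \;=\; \left(\frac{er^2}{4nt}\right)^t \;=\; \exp\!\left(-t \ln\frac{4nt}{er^2}\right),
\]
which is exactly the claimed inequality. Finally, when $r \le (2/e)\sqrt{nt}$ we have $r^2 \le 4nt/e^2$, hence $er^2/(4nt) \le 1/e$, giving the promised $\Pr[Y \ge t] \le e^{-t}$.

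There is no real obstacle here; the one subtlety worth being explicit about is that the $I_j$'s are negatively dependent (pairs compete for the $r$ slots in $A$), so one might worry that naive bounds are too lossy. The union bound sidesteps this entirely because we only use $t$-wise intersection probabilities, which are clean hypergeometric quantities and admit the tight geometric bound $(r/(2n))^{2t}$. The only computational care is checking monotonicity of the factors $(r-i)/(2n-i)$ to justify the product bound and handling the degenerate ranges $r < 2t$ or $4nt \le er^2$ as trivial cases.
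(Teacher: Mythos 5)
Your proof is correct, but it takes a genuinely different route from the paper. The paper establishes that the indicators $I_j$ are negatively associated (via the permutation-distribution results of Joag-Dev and Proschan, and the transfer of Chernoff--Hoeffding bounds to negatively associated sums from Dubhashi and Panconesi), then compares $Y$ to a binomial $Z \sim \Bin(n, r^2/(4n^2))$ and applies a sharp relative-entropy Chernoff bound, simplifying $\relent{a}{p}$ to reach $\exp(-t\ln\frac{4nt}{er^2})$. You instead observe that $\{Y \ge t\}$ forces some size-$t$ subset of indices to have both of its elements in $A$, take a union bound over the $\binom{n}{t}$ subsets, and compute the intersection probability exactly as the hypergeometric quantity $\prod_{i=0}^{2t-1}\frac{r-i}{2n-i} \le (r/2n)^{2t}$; together with $\binom{n}{t} \le (en/t)^t$ this gives precisely the same bound. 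Your handling of the edge cases ($r < 2t$ making the probability zero, and $4nt \le er^2$ making the claim vacuous) is appropriate, and the monotonicity check $(r-i)/(2n-i) \le r/(2n)$ for $r \le 2n$ is right. The trade-off: your argument is elementary and self-contained, needing no negative-association machinery or external concentration theorems, which is arguably preferable for this lemma since only this one-sided upper tail at threshold $t$ is needed; the paper's approach buys a more general statement in principle (full Chernoff-type control of $Y$ via negative association), but none of that extra generality is used downstream.
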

\begin{proof}
  For $1 \le j \le n$, put $U_j = \indic_{(j,0) \in A}$ and $V_j = \indic_{(j,1) \in A}$, so that $I_j = U_j V_j$. We shall use some terminology and results from Joag-Dev and Proschan~\cite{JoagDevP83}. The tuple $(U_1, V_1, \ldots, U_n, V_n)$ has a permutation distribution~\cite[Definition~2.10]{JoagDevP83}. Therefore, these random variables are negatively associated~\cite[Theorem~2.11]{JoagDevP83}. Since the variables $I_j$ are increasing functions of disjoint subsets of these variables, they too are negatively associated~\cite[Property~P$_6$]{JoagDevP83}. Invoking a result from Dubhashi and Panconesi~\cite[Theorem~3.1]{DubhashiP-book}, we can ``apply Chernoff-Hoeffding bounds as is'' to the sum $Y = \sum_{j=1}^n I_j$.
  
  To be precise, note that
  \[
    \EE Y = \sum_{j=1}^n \EE I_j = n \cdot \frac{r}{2n} \cdot \frac{r-1}{2n-1} \le \frac{r^2}{4n} \,.
  \]
  So consider the binomially distributed random variable $Z \sim \Bin(n,p)$, where $p = r^2/(4n^2)$. Then, for all $t > 0$, we have $\Pr[Y \ge t] \le \Pr[Z \ge t]$. We shall use the following very precise Chernoff bound; see, e.g., Theorem~1 of Arratia and Gordon~\cite{ArratiaG89}. For all reals $a, p$, with $0 < p < a < 1$,
  \begin{align}
    \Pr[Z \ge an] &\le e^{-n\, \relent{a}{p}} \,, \label{eq:chernoff} \\
    \text{where } \relent{a}{p} &= a \ln \frac{a}{p} + (1-a) \ln \frac{1-a}{1-p} \label{eq:relent}
  \end{align}
  is the relative entropy of the Bernoulli distribution $\Bern(a)$ to $\Bern(p)$. We simplify~\eqref{eq:relent} as follows.
  \[
    \relent{a}{p} \ge a \ln(a/p) + (1-a) \ln(1-a) \ge a \ln(a/p) - a = a \ln(a/(ep)) \,.
  \]
  Using this in~\eqref{eq:chernoff}, with the setting $a = t/n$, gives us the claimed bound on $\Pr[Y \ge t]$.
\end{proof}

\subsection{The Unproductivity of Shallow Decision Trees}

We now come to the main technical thread of the proof. As shown in \Cref{sec:lb-setup}, proving this lemma will complete the proof of the tradeoff lower bound $ST^2 = \Omega(n^3)$.

\begin{lemma}[Main technical lemma] \label{lem:productive-tree}
Consider a subprogram of $\cB$ of depth $r \le \floor{n/2}$. Then for every $t \le \floor{r/2}$, the probability that a uniformly-random input would produce at least $2t$ outputs following this subprogram is at most $(n-r-t)^{-t}+e^{-t}$.
\end{lemma}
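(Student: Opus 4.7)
The plan is to analyze a uniformly random input $\bx\in\cX$ running against the depth-$r$ subprogram $\cT$. Write $I=I(\bx)\subseteq[2n]$ for the positions read, $V=V(\bx)$ for the values observed, and $O=O(\bx)$ for the multiset of outputs produced. Since $\cT$ is a subprogram of the correct branching program $\cB$ and $\bx$ is valid, every output in $O$ is a true match of $\bx$. I would classify each output $(i,j,v)\in O$ by its intersection with $I$: type~A ($i,j\in I$), type~B (exactly one of $i,j$ in $I$), type~C ($i,j\notin I$); call their counts $A,B,C$. Because $|O|\ge 2t$ implies either $A\ge t$ or $B+C\ge t+1$, I would split
\[
\Pr[|O|\ge 2t]\;\le\;\Pr[A\ge t]\;+\;\sum_{\ell\,:\,|O_\ell|\ge 2t,\,A_\ell<t}\Pr[\bx\text{ reaches }\ell\text{ and all of }O_\ell\text{ is correct}],
\]
and then bound the two summands separately, aiming for $e^{-t}$ and $(n-r-t)^{-t}$ respectively.

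For the first summand, let $Y$ be the number of true matches of $\bx$ lying entirely inside $I$. Correctness of type-A outputs forces $A\le Y$, so $\Pr[A\ge t]\le\Pr[Y\ge t]$. When $I$ is a \emph{fixed} size-$r$ subset of $[2n]$ and the matching of $\bx$ is uniformly random, $Y$ has exactly the distribution studied in \Cref{lem:exp-bound}, so the bound $\Pr[Y\ge t]\le e^{-t}$ is immediate in the regime we ultimately use. To handle the adaptive $I$ produced by the tree, I would use the principle of deferred decisions: conditional on the history of equality patterns observed after $k-1$ reads, the partner of the next queried position under the uniform matching is symmetric over the remaining candidates, so the step-wise probability of ``closing'' a new match is dominated by the non-adaptive rate, and the tail bound of \Cref{lem:exp-bound} carries over.

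For the second summand I would use a direct counting argument in $\cX$. Fix a leaf $\ell$ with read positions $I_\ell$, observed values $V_\ell$, and declarations $O_\ell$ involving $b:=B_\ell$ type-B matches and $c:=C_\ell$ type-C matches, so $b+c\ge t+1$. The declarations force prescribed values at $b+2c$ specific unread positions. Writing out explicit multinomial expressions for $|\{\bx\in\cX:\bx_{I_\ell}=V_\ell,\text{ all of }O_\ell\text{ correct}\}|$ and dividing by $|\{\bx\in\cX:\bx_{I_\ell}=V_\ell\}|$ yields
\[
\Pr[\text{all of }O_\ell\text{ correct}\mid\bx\text{ reaches }\ell]\;\le\;\prod_{s=0}^{b+2c-1}\frac{1}{2n-r-s},
\]
where the ``value-choice'' factor from type-C matches is at most $1$ under the harmless assumption $R\ge 2n$. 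Using $b+c\ge t+1$, together with $r\le n/2$ and $t\le r/2$, I would bound the right side uniformly by $(2n-r-t)^{-(t+1)}\le(n-r-t)^{-t}$, independent of $\ell$. Since the events $\{\bx\text{ reaches }\ell\}$ partition $\cX$, $\sum_\ell\Pr[\bx\text{ reaches }\ell]\le 1$, and the second summand is at most $(n-r-t)^{-t}$; combining with the first summand completes the proof.

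The step I expect to be most delicate is the adaptivity argument in bounding $\Pr[Y\ge t]$. \Cref{lem:exp-bound} applies to a uniformly random size-$r$ subset of $[2n]$, whereas here $I$ is chosen adaptively by $\cT$ based on the equality pattern among revealed values, so $I$ is in general correlated with the matching structure of $\bx$. The task is to verify rigorously that whatever partial information the tree extracts about the matching cannot boost $Y$ stochastically beyond the non-adaptive baseline; a careful martingale / deferred-decisions bookkeeping at each read step should do the job, but it is the one place where some genuine care is required. By contrast, the multinomial counting behind the second summand is essentially mechanical once the casework on type-A, type-B, and type-C declarations is set up.
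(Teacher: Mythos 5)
Your proposal is correct and follows essentially the same route as the paper's proof: the same case split into outputs whose two indices are both read (bounded via \Cref{lem:exp-bound}) versus at least $t+1$ outputs with at most one index read (bounded by a per-leaf conditional guessing/counting estimate and summed over leaves). The one step you flag as delicate---adaptivity of the read set---is resolved in the paper by an exchangeability argument: since every path reads $r$ distinct positions and the set of valid inputs is symmetric under permuting positions, the number of matched pairs closed among the read positions has exactly the same distribution as the non-adaptive quantity $Y$ of \Cref{lem:exp-bound}, which is precisely the identity your deferred-decisions bookkeeping would establish.
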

\begin{proof}
Unfold this subprogram into a decision tree with output (by replicating nodes as needed). If necessary, remove any redundant nodes in the tree, where an already-queried variable is re-queried, and increase the depth of every leaf to $r$ by querying some dummy variables. Let $\cT$ be the resulting tree. Then, in $\cT$, along every path from the root to a leaf, exactly $r$ inputs are queried; no input is queried and no output is generated more than once along any path.

Let $\Pi$ be the set of all root-to-leaf paths in $\cT$. Let $\bx$ be a uniformly random input. Let $\pi(\bx)$ denote the path that $\bx$ follows in $\cT$.

For each path $\pi \in \Pi$, let $s(\pi)$ denote the number of outputs along $\pi$. Suppose those $s(\pi)$ outputs are $(y_k^1, y_k^2, v_k)$, for $1 \le k \le s(\pi)$. Recall that each such output is a declaration that $x_{y_k^1} = x_{y_k^2} = v_k$. Let $q_1(\pi), q_2(\pi), \ldots, q_r(\pi)$ be the indices of the positions queried along $\pi$, in the order that they are queried: thus $q_i(\pi) \in [2n]$ for each $i \in [r]$. Suppose the results of the queries are $x_{q_i(\pi)} = w_i(\pi)$ for each $i \in [r]$. Define an ordered list $W(\pi) = [w_1(\pi), w_2(\pi), \dots, w_r(\pi)]$. Then there is a bijection between such an ordered list and a path; i.e. no two distinct paths $\pi_1, \pi_2$ have $W(\pi_1) = W(\pi_2)$ and any ordered list of $r$ numbers picked from $\{1, 1, 2, 2, 3, 3, \dots, R, R \}$ corresponds to a path.

Let $p(\pi) = \Pr{[\pi(\bx) = \pi]}$ denote the probability that a uniformly random input follows path $\pi$. Then
\[
  p(\pi) = \frac{|\{\bx \in \cX:\, x_{q_i(\pi)} = w_i(\pi), \forall i: 1 \le i \le r\}|}{|\cX|}
  = \frac{|\{\bx \in \cX:\, x_{i} = w_i(\pi), \forall i: 1 \le i \le r\}|}{|\cX|} \,,
\]
since all indices are symmetric.

For a uniformly-random input $\bx$, let random variable $X$ be the total number of equal pairs in $W(\pi(\bx))$.

Let $Y$ be defined as in \Cref{lem:exp-bound}, above. Clearly, $Y$ is the number of equal pairs found upon querying $x_1, x_2, \dots, x_r$ for a uniformly random input. We now argue that $X \equiv Y$ in distribution, i.e., that $\Pr[X = u] = \Pr[Y = u]$, for all $u$ with $0 \le u \le \lfloor r/2 \rfloor$.
Let $\Pi_u \subseteq \Pi$ denote the set of all paths in $\cT$ with exactly $u$ equal pairs in their $r$ queries. Then

\begin{align*}
  \Pr{[X = u]} = \Pr{[\pi(\bx) \in \Pi_u]} &= \sum_{\pi \in \Pi_u}{p(\pi)}
  = \frac{\sum_{\pi \in \Pi_u}|\{\bx \in \cX:\, x_{i} = w_i(\pi), \forall i: 1 \le i \le r\}|}{|\cX|} =\Pr{[Y = u]} \,.
\end{align*}

Since $X \equiv Y$, it follows from \Cref{lem:exp-bound} that
\[
  \Pr{[X \ge t]} = \Pr{[Y \ge t]}\le e^{-t} \,.
\]

Call a path $\pi$ in $\cT$ \textit{good} if and only if it produces at least $2t$ outputs and at least $t$ of those outputs are of two variables which have both been queried along $\pi$. Call an input $\bx$ \textit{good} if and only if $\pi(\bx)$ is good. So the probability that a uniformly-random input is good satisfies $$\Pr{[\pi(\bx) \text{ is good}]} \le \Pr{[X \ge t]} \le e^{-t}.$$

When an input passes through a bad path $\pi$, either $\pi$ does not generate enough outputs, or it generates at least $t+1$ outputs of two variables that the path does not both query. Pick $t+1$ such outputs arbitrarily. Among those $t+1$ output pairs, suppose $k$ pairs have exactly one variable queried along $\pi$, and $p=t+1-k$ pairs have neither variables queried along $\pi$.

We first consider the $k$ outputs with exactly one variable queried. Denote those $k$ queried variables as $x_{b_1}, x_{b_2}, \dots, x_{b_k}$. In total, there are at least $n-r$ variables that $\pi$ does not query. For a uniformly random input consistent with queries along $\pi$, all of those unqueried variables are equally likely to match with $x_{b_1}$. Therefore, the probability that the variable matched with $x_{b_1}$ is correct is at most $1/(n-r)$. For $1 \le i \le k-1$, suppose the variables matched with $x_{b_1}, \dots, x_{b_i}$ are correct, we have at least $n-r-i$ equally likely choices to match with $x_{b_{i+1}}$. Therefore, the probability that $x_{b_1}, x_{b_2}, \dots, x_{b_k}$ are all matched correctly is at most
\[
  \left(\frac{1}{n-r}\right)\left(\frac{1}{n-r-1}\right) \cdots \left(\frac{1}{n-r-k+1}\right) \le (n-r-k+1)^{-k} \,.
\]

Next we consider the $p$ outputs with neither variable queried. Denote the {\em values} of those $p$ pairs as $v_{c_1}, v_{c_2}, \dots, v_{c_p}$. In total, there are at least $R-r$ possible values of variables that $\pi$ does not discover in its queries. For a uniformly random input consistent with queries along $\pi$, all of those values of variables are equally likely to be the value of any of the $p$ pairs. Therefore, the probability that $v_{c_1}, v_{c_2}, \dots, v_{c_p}$ are all correct values is at most
\[
  \left(\frac{1}{R-r}\right)\left(\frac{1}{R-r-1}\right) \cdots \left(\frac{1}{R-r-p+1}\right) \le (n-r-p+1)^{-p} \,.
\]

Therefore, if we uniformly randomly draw one input $\bx$ from the set of all bad inputs, the probability that $\bx$ generates at least $2t$ correct outputs along $\pi(\bx)$ satisfies
\begin{align*}
  \Pr{[\bx \text{ produces at least } 2t \text{ correct outputs} \mid \bx \text{ is bad}]} 
  &\le \Pr{[\text{Guesses of } k+p \text{ matches are all correct}] } \\
  &\le (n-r-k+1)^{-k} (n-r-p+1)^{-p} \\
  &\le \left(n-r-t\right)^{-t-1} \\
  &< (n-r-t)^{-t} \,.
\end{align*}

Therefore, the total probability of a uniformly random input $\bx$ (either good or bad) producing at least $2t$ correct outputs satisfies
\begin{align*}
  \Pr{[\bx \text{ produces at least } 2t \text{ correct outputs}]} 
  &< (n-r-t)^{-t}\Pr{[\bx \text{ is bad}]} + \Pr{[\bx \text{ is good}]} \\
  &< (n-r-t)^{-t}+e^{-t} \,,
\end{align*}
which completes the proof.
\end{proof}

\subsection{Lower Bound for Randomized Algorithms}

The lower bound proved so far applies only to {\em deterministic} strategies for the Memory game. However, the ideas readily extend to give a similar lower bound for randomized strategies. We sketch this generalization.

Given space constraint $S$, suppose a Las Vegas algorithm $\mathcal{A}$ for the Memory game takes expected time $T = T(n,S)$. Convert $\mathcal{A}$ to a Monte Carlo algorithm $\mathcal{A'}$ that exactly mimics $\mathcal{A}$ expect that it always stops after $10T$ steps, at which point it outputs "Error". By Markov's Inequality, $\mathcal{A'}$ is correct for any input with probability $9/10$.

By Yao's Lemma, there is a deterministic algorithm running in time $10T$ that generates $n$ correct outputs for a set $\cX''$ of inputs, where $|\cX''| / |\cX| \ge 9/10$. Model this deterministic algorithm as a branching program $\cB$ with length $10T$ and width at most $2^S$.

Similar to the proof for the deterministic lower bound, we divide $\cB$ into stages of length $r = \floor{(2/e)\sqrt{nS}}$. Assume that the number of stages is at most $n/(2S)$. Then there exists a subtree at some stage that produces at least $2S$ correct outputs for a $(2S/n)2^{-S}$ fraction of inputs in $\cX''$. We arrive at the contradiction that
\[
  \frac{9}{10} \cdot \frac{2S}{n}\cdot 2^{-S} \le e^{-S} + \left(\frac{n}{2}\right)^{-S} \,.
\]

So $\cB$ has more than $n/(2S)$ stages. Since each non-last stage has length $r$, it follows that
\[
  10T \ge \frac{nr}{2S} = \frac{n\floor{(2/e)\sqrt{nS}}}{2S} = \frac{\Omega(n^{3/2})}{\sqrt S} \,,
\]
which proves the tradeoff $T\sqrt S = \Omega(n^{3/2})$.


\section{Concluding Remarks} \label{sec:concl}

In this work, we studied the complexity of the Memory Game (also known as ``Concentration'') in a limited-memory setting, proving a nontrivial time-space lower bound. We showed that, when the player has $S$ bits of memory, the number of card flips required to guarantee completion of the game is $\Omega(n^{3/2}/\sqrt{S})$, for both deterministic and randomized algorithms.

Our results suggest a number of directions for future work.

First, we could hope to obtain a stronger bound in a comparison-based branching program model. In this more restricted model, each node in the branching program compares two variables $x_i, x_j$ and branches three ways, corresponding to the three cases $x_i<x_j$, $x_i=x_j$, and $ x_i>x_j$, respectively. We conjecture that a stronger bound $ST = \Omega(n^{2-\epsilon})$ is achievable in this model.

We further conjecture that every randomized query strategy for \mem using pairwise equality queries must make $\tOmega(n^2)$ queries in total, i.e., that our \Cref{thm:blind} extends to the randomized case.

This work was motivated in part by the general question of what time-space tradeoffs one can obtain for the problem (in fact, family of problems) that calls for finding a large matching in an input graph. There are a number of interesting space-bounded algorithms for approximate maximum matching in the data streaming model; see, e.g., Crouch and Stubbs~\cite{CrouchS14} and the references therein. There are also some corresponding lower bounds for this problem in a one-pass streaming model; see, e.g., Goel et al.~\cite{GoelKK12} and Assadi et al.~\cite{AssadiKL17}. The streaming model is a very restrictive model for space-bounded algorithms: it requires the input to be accessed in a sequential fashion. One-pass streaming is even more restrictive. Yet, the precise tradeoff between space and approximation quality for maximum matching remains open even in the one-pass case, and there are essentially no nontrivial lower bounds in the multi-pass case. We hope that the lessons learned in studying the Memory game will find applications in establishing time-space trade-offs for finding large matchings in graphs. One can think of the Memory game as a graph on $2n$ vertices where only equal pairs are connected by an edge. What we have shown here are tradeoffs for discovering this perfect matching that is promised to exist.

\section*{Acknowledgments}

We would like to thank Graham Cormode and Peter Winkler for helpful discussions about the problems considered in this work.

\bibliographystyle{alpha}
\bibliography{refs}

\appendix
\appendix
\section{A Lower Bound for the Unique Pairs Problem}
\label{unique2}
In section \ref{RelatedWork}, we compared the hardness of \mem to \textsc{unique-elements}. Here we show that a time-space tradeoff of $TS=\Omega(n^2)$ can be obtained for a variant of \mem that is a harder task (and thus easier to prove a lower bound). 

Consider the following problem of \textsc{unique-pairs}: Given an input of $2n$ integers in $[n]$, output all pairs $(i,j)$ with $i < j$ such that $x_i = x_j$ and no other $k \in [2n]$ satisfies $x_i = x_j = x_k$. Modeling the proof for \textsc{unique-elements} in \cite{BeameLecNotes}, we now prove that any algorithm computing \textsc{unique-pairs} that runs in time at most $T$ and uses space at most $S$ has $TS=\Omega(n^2)$.

\begin{proof}
For a uniformly random input over $[n]^{2n}$, the expected output size for \textsc{unique-pairs} is $$\EE{[\text{\# outputs}]}=\binom{n}{2} \frac{1}{n} \left(1-\frac{1}{n}\right)^{2n-2}>\frac{n-1}{2e^2}$$
By Markov's inequality, the probability that the output size is at least $(n-1)/(4e^2)$ is $$\Pr{\left[\text{\# outputs}>\frac{n-1}{4e^2}\right]} \ge c$$ for some constant $0<c<1$.

Let $\cB$ be the layered branching program computing \textsc{unique-pairs}. We have $T \ge 2n$. For some $h$ to be chosen later, we partition $\cB$ into $T/h$ stages of depth $h$. We call an input \textit{good} if \textsc{unique-pairs} generates at least $(n-1)/(4e^2)$ outputs for that input. So $\cB$ run on a good input should produce at least $$m=\frac{n-1}{4e^2(T/h)}=\frac{(n-1)h}{4e^2T}$$ outputs in some stage. Same as the argument in \cite{BeameLecNotes}, for $m \le n/4$ and $h \le n/4$, for any $\cB$ of depth at most $h$, $$\Pr[\cB \text{ produces at least } m \text{ correct outputs}]\le e^{-c'm}$$ for some constant $c'>0$. Choosing $h=n/4$, we have $m \le n/4$. Since the fraction of good inputs is at least $c$, we need $$2^S e^{-c'm} \ge c$$ which gives us $S=\Omega(m)=\Omega(n^2/T)$.
\end{proof}

\end{document}